\newcommand{\bea}{\begin{eqnarray}}
\newcommand{\eea}{\end{eqnarray}}
\newcommand{\ba}{\begin{array}}
\newcommand{\ea}{\end{array}}
\newcommand{\edc}{\end{document}}
\newcommand{\bc}{\begin{center}}
\newcommand{\ec}{\end{center}}
\newcommand{\be}{\begin{equation}}
\newcommand{\ee}{\end{equation}}
\newtheorem{thm}{Theorem}[section]
\newtheorem{lem}[thm]{Lemma}
\newtheorem{prop}[thm]{Proposition}
\newtheorem{defin}[thm]{Definition}
\theoremstyle{remark}
\newtheorem{rem}{Remark}[section]
\title[Determination of paramagnetic and ferromagnetic phases]
{Determination of paramagnetic and ferromagnetic phases of an
Ising model on a third-order Cayley tree}
\author[H. Ak\i n] {H. Ak\i n\footnote{email: akinhasan25@gmail.com}}
\address{Ceyhun Atuf Kansu Caddesi 1164. Sokak, 9/4, TR06105, \c{C}ankaya, Ankara,
Turkey}
\date{Received March 28, 2020, in final form September 23, 2020}
\begin{document}
\maketitle

\begin{abstract}
In this present paper,  the recurrence equations of an Ising model
with three coupling constants  on a third-order Cayley tree are
obtained. Paramagnetic and ferromagnetic phases associated with
the Ising model are characterized. Types of phases and partition
functions corresponding to the model are rigorously studied. Exact
solutions of the mentioned model are compared with the numerical
results given in Ganikhodjaev {et al.} [ J. Concr. Appl. Math., 2011, \textbf{9}, No.~1, 26--34].

\keywords Cayley tree, Ising model, paramagnetic phase,
ferromagnetic phase
\end{abstract}

\section{Introduction}

In magnetic and ferroelectric systems, phase diagrams of a model
with various transition lines and modulated phases are obtained
with the presence of different competing interactions
\cite{Inawashiro}.
In order to picture all phase diagrams of a given model, many
researchers  used the recurrence relations associated with the
model and observed their dynamical properties after a large number
of iterations
\cite{Vannimenus,NHSS,UGAT2012IJMPC,MTA1985a,Yokoi-Oliveira-S1985}.
Recently, in references \cite{Lebowitz1,Vannimenus,MTA1985a}, some
features of complex phase diagrams corresponding to the ANNNI
(Axial Next-Nearest-Neighbor Ising) model consisting of an Ising
spin Hamiltonian on a Cayley tree were studied.

The existence and quantification of the modulated phase diagrams
after  iterations of relevant recurrence equations as a way of
probing the ground phases of Ising model have gained much attention~\cite{GAUT2011Chaos,GU2011a,GTA,Vannimenus,NHSS1,AUT2010AIP,UA2011CJP,UA2010PhysicaA,UGAT2012ACTA,UGAT2012IJMPC,MTA1985a,Inawashiro,Inawashiro-T1983}.
In general, many authors have plotted the phase diagrams
associated with the model by means of the periodic fixed points of
an operator consisting of recurrent relations
\cite{GAUT2011Chaos,GU2011a,GTA,Vannimenus,NHSS1,AUT2010AIP,Moraal,UGAT2012IJMPC,MTA1985a}.
In the above-mentioned works, most results are numerically
obtained. In \cite{Chakraborty1992}, Chakraborty  investigated
the effects of inclusion of three-spin and four-spin couplings to
the Ising model. Furthermore, Chakraborty  employed the
molecular-field approximation to study the effects of both
three-spin and four-spin couplings on the field-free and
field-induced phase transitions possible for the model. In this
paper, we obtain our results in an analytical way by comparing the
numerical results.

Horiguchi \cite{Horiguchi1986}  proved that there exists a
phase transition in the Ising model on the square lattice with
two-spin interactions in the vertical direction and with
slantwise, alternate three-spin interactions in the horizontal
direction. Moreover, Horiguchi  proved that the system described by
Hamiltonian is in the paramagnetic state at high temperatures and
obtained an upper bound to the critical temperature. Azhari
{et al.} \cite{Azhari2017}  investigated the magnetic
properties of the mixed spin-1/2 and spin-1 Ising ferromagnetic
system with four-spin interaction $J_4$ and next-nearest neighbor
(NNN) coupling $J_0$. They are interested in the phase diagram and
in the location and the multitude of the compensation point.

In \cite{UGAT2012IJMPC}, we numerically studied the Lyapunov exponent
and modulated phases for the Ising model with different coupling
constants on an arbitrary-order Cayley tree. We also plotted the
variation of the wavevector $q$ with temperature in the modulated
phases. In \cite{Akin-Saygili2015} we  described the existence
of a phase transition problem by means of Gibbs measures of the
Potts model on an order three Cayley tree. In \cite{ART}, we 
constructed a class of new Gibbs measures by extending the known
Gibbs measures for the Ising model on a Cayley tree of the order $k_0$
to a Cayley tree of higher order $k > k_0$. Nazarov and Rozikov
 established the invariant subsets of an operator given in
\cite{Nazarov-Rozikov} and identified the periodic Gibbs measures
with period two by means of the operator. Here, in order to obtain
the periodic fixed points associated with the recurrence
equations, we use  similar methods in \cite{AGUT,AGTU2013ACTA}.

In \cite{Akin2017}, the author studied the Gibbs measures
associated with Vannimenus-Ising model for the compatible
conditions. He proved the existence of the
translation-invariant Gibbs measures with respect to the
compatible conditions.

In  \cite{Akin2016,Akin2017,Akin2017a,AT1}, we 
considered an external magnetic field to investigate the Gibbs
measures (see \cite{BRZ,BleherG}). In \cite{RAU}, we 
analytically studied the recurrence equations of an Ising model
with two coupling constants on a second-order Cayley tree without
considering the numerical investigation. Therefore, our present
results differ from the ones obtained in the mentioned papers.

The main purpose of the paper is to analytically derive the
recurrence equations of an Ising model with three coupling
constants on a third-order Cayley tree and to obtain the
paramagnetic, ferromagnetic and 2-period phases of the model by
means of the related recurrence equations by using the method
given in~\cite{RAU}. We exactly describe the paramagnetic
phase of the Ising model. We investigate some phase diagrams of an
Ising model with 2-spin couplings between the nearest neighbors
and the next-nearest neighbors, plus a 3-spin interaction, on a
Cayley tree with coordination number $z=4$. We obtain some
rigorous results: critical temperatures and curves, number of
phases, the partition functions. This model was numerically
examined by Ganikhodjaev {et al}. \cite{GAUT2011Chaos} on
semi-infinite second-order Cayley tree. 

\section{Preliminary}
\subsection{Cayley tree}

Any $k$ ($k>1$)-order Cayley tree $\Gamma^{k}$ is a weave pattern
in which $(k + 1)$ edges from each vertex point extend infinitely
as shown in figure~\ref{cayley-tree-k=3} ($k=3$). For the Cayley
tree shown as $\Gamma^{k}=(V,\Lambda)$, $V$ denotes the vertices
of the Cayley tree and $\Lambda$ denotes the set of edges. If
there is an edge $\ell$ joining two vertices $x$ and $y$, it is
called ``nearest neighbor'' and $\ell= < x, y > $. The distance of $x$
and $y$, over $V$ is defined as $d(x, y)$, the shortest path
between $x$ and $y$. 
The set of edge points in $V_n$ is represented as $L_n$ (see
\cite{Akin2017} for details). Note that we  consider a
semi-infinite third-order Cayley tree which has got uniformly bounded
degrees.
\begin{figure} [!t]
\centering
\includegraphics[width=60mm]{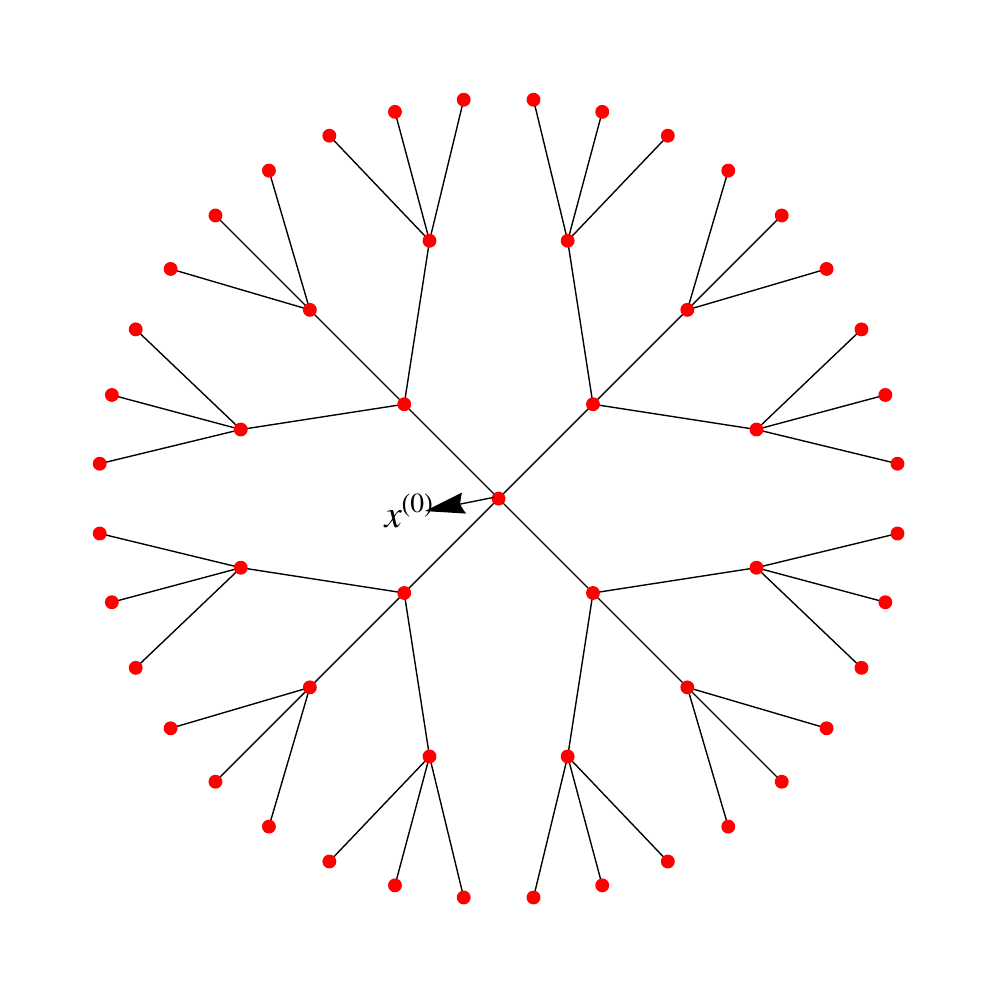}
\caption{(Colour online) Three-order Cayley tree with three
levels.}\label{cayley-tree-k=3}
\end{figure}
The set of all vertices with the distance $n$ from the root  $x^{(0)}$
is called the $n$th level of $\Gamma^{k}$ and we denote the sphere
of radius $n$ on $V$ by
$$ W_n=\{x\in V:
d(x,x^{(0)})=n \}
$$ and the ball of radius $n$ by
$$ V_n=\{x\in V:
d(x,x^{(0)})\leq n \}.
$$ The set of direct successors of any vertex
$x\in W_n$ is denoted by
$$
S(x)=\{y\in W_{n+1}:d(x,y)=1 \}.
$$
Hereafter, we  use the following definitions for
neighborhoods.
\begin{defin}\label{neighborhoods}
\begin{enumerate}
\item Two vertices $x$ and $y$, $x,y \in V$ are called {\it
\textbf{nearest-neighbors (NN)}} if there exists an edge
$\ell\in\Lambda$ connecting them, which is denoted by
$\ell=< x, y >$. \item The next-nearest-neighbor vertices $x\in W_n$
and $z\in W_{n+2}$ ($z\in S^2(x)$) are called {\it
\textbf{prolonged next-nearest-neighbors (PNNN)}} if $|x|\neq |z|$
and is denoted by $>x,z<$ \item The triple of vertices $x,y,z$ is
called {\it \textbf{ternary prolonged next-nearest-neighbors}} if
$x\in W_n,y\in S(x)$ and $z\in S(y)$ ($x\in W_n,y\in W_{n+1}$ and
$z\in W_{n+2}$) for some nonnegative integer $n$  and is denoted
by $>x,y,z<$.
\end{enumerate}
\end{defin}
In this paper, we  consider a Hamiltonian with
\textbf{competing nearest-neighbor interactions},
\textbf{prolonged next-nearest-neighbors (PNNN)}  and
\textbf{ternary prolonged next nearest-neighbor interactions}.
Therefore, we can state the Hamiltonian by
\begin{eqnarray} H(\sigma )&=&-J\sum _{
\begin{array}{l}
 < x, y > \\
\end{array}
} \sigma (x)\sigma (y)-J_p\sum _{
\begin{array}{l}
 >x,z< \\
\end{array}
} \sigma (x)\sigma (z)\\\nonumber &&-J_t\sum _{
\begin{array}{l}
 >x,y,z< \\
\end{array}
} \sigma (x)\sigma (y)\sigma (z),
\label{Ham-Ist}
\end{eqnarray}
where $J,J_p, J_t\in \mathbb{R}$ are coupling constants and $<x,
y>$ stands for NN vertices, $>x,z<$ stands for prolonged NNN and
$>x,y,z<$ stands for prolonged ternary NNN.
\section{Definitions and equations}
In this section, we  construct the recurrent equations
associated with the Hamiltonian \eqref{Ham-Ist}.
It is well-known that there are different approaches for
determining the Gibbs measures (phases) of lattice models such as
Ising and Potts with competing interactions in the literature
\cite{GATTMJ,AT1,Rozikov,Bleher-Zalys,Bleher1990a}. Recently, some
researchers~\cite{Akin2017a,AT1,Gandolfo2012,Kindermann} 
determined the Gibbs measures associated with the Ising models by
means of the fixed points of operators obtained by the partition
functions. In our work, we  use the method based on the
recursive equations to derive the phases.

After specifying a Hamiltonian $H$, the equilibrium state of a
physical system with Hamiltonian $H$ is described by the
probability measure
$$
\mu(\sigma _n)=\frac{\exp [-\beta H(\sigma _n)]}{{\underset{\eta
_n\in \{-1,+1\}^{V_n}}{\sum }}\exp [-\beta H(\eta_n)]},
$$
where $\beta$ is a positive number which is proportional to the
inverse of the absolute temperature. The above $\mu$ is called the
Gibbs distribution relative to $H$. The standard approach consists
in writing down recurrence equations relating the partition
function
$$
Z_n=\sum _{\eta _n\in \{-1,+1\}^{V_n}} \exp [-\beta H(\eta_n)],
$$
of an $n$-generation tree to the partition function $Z_{n-1}$ of
its subsystems containing $(n-1)$ generations [see figure~\ref{fig2}].

Many authors  introduced the notions of ground states (Gibbs
measure) of the Ising model with competing interactions on the
Cayley tree \cite{Akin2016,Akin2017,MAKfree2017,AkinRG2020}. Here,
we consider a shorter notation to write down the recurrence system
\begin{eqnarray*}\label{rec-eq2}
\left\{
\begin{array}{l}
 z_1=Z_n\left(
\begin{array}{ccc}
 + & + & + \\
  & + &
\end{array}
\right), \\
 z_2=Z_n\left(
\begin{array}{ccc}
 + & + & - \\
   & + &
\end{array}
\right)=Z_n\left(
\begin{array}{ccc}
 + & - & + \\
   & + &
\end{array}
\right)=Z_n\left(
\begin{array}{ccc}
 - & + & + \\
   & + &
\end{array}
\right), \\
 z_3=Z_n\left(
\begin{array}{ccc}
 + & - & - \\
  & + &
\end{array}
\right)=Z_n\left(
\begin{array}{ccc}
 - & + & - \\
   & + &
\end{array}
\right)=Z_n\left(
\begin{array}{ccc}
 - & - & + \\
 & + &
\end{array}
\right), \\
 z_4=Z_n\left(
\begin{array}{ccc}
 - & - & - \\
  & + &
\end{array}
\right),
\end{array}
\right.
\end{eqnarray*}
\begin{eqnarray*}\label{rec-eq2}
\left\{
\begin{array}{l}
 z_5=Z_n\left(
\begin{array}{ccc}
 + & + & + \\
  & - &
\end{array}
\right), \\
 z_6=Z_n\left(
\begin{array}{ccc}
 + & + & - \\
   & - &
\end{array}
\right)=Z_n\left(
\begin{array}{ccc}
 + & - & + \\
   & - &
\end{array}
\right)=Z_n\left(
\begin{array}{ccc}
 - & + & + \\
   & - &
\end{array}
\right),\\
 z_7=Z_n\left(
\begin{array}{ccc}
 + & - & - \\
  & - &
\end{array}
\right)=Z_n\left(
\begin{array}{ccc}
 - & + & - \\
   & - &
\end{array}
\right)=Z_n\left(
\begin{array}{ccc}
 - & - & + \\
 & - &
\end{array}
\right). \\
 z_8=Z_n\left(
\begin{array}{ccc}
 - & - & - \\
  & - &
\end{array}
\right).
\end{array}
\right.
\end{eqnarray*}
\begin{figure} [!t]
\centering
\includegraphics[width=80mm]{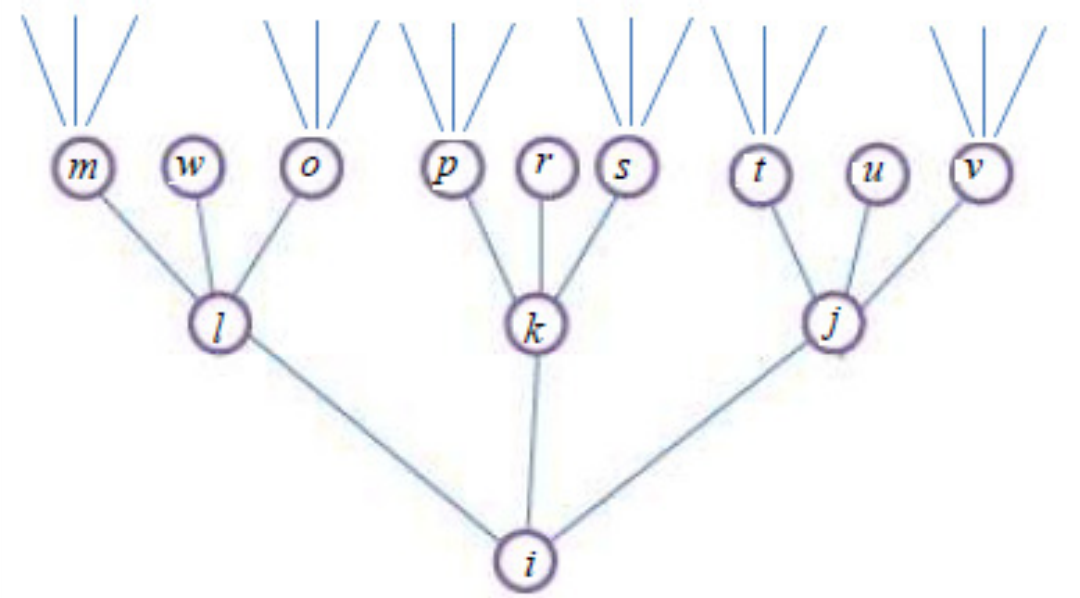}
\caption{(Colour online) Configurations on semi-finite Cayley tree of the order three
with levels 2. Schematic diagram  illustrating the summation
used in equation \eqref{partition1aa}.}\label{fig2}
\end{figure}
For the sake of simplicity, we denote $Z_{(n+1)}\left(
\begin{array}{ccc}
 l & k & j \\
  & i &
\end{array}
\right)=Z_{(n+1)}\left(i;l,k,j\right)$. We have
\begin{eqnarray}\label{partition1aa}
Z_{n+1}\left(i;l, k, j\right)&=&\sum _{m,w,o,p,r,s,t,u,v\in
\{-1,+1\}} [\exp
(A(m,w,o,p,r,s,t,u,v))\\
&&\times Z_{n}\left(l;m,w,o\right)Z_{n}\left(k;p,r,s\right)Z_{n}\left(j;t,u,v\right)]\,,\nonumber
\end{eqnarray}
where
\begin{eqnarray*}
A(m,w,o,p,r,s,t,u,v)&=&Ji(j+k+l)+J_pi(m+w+o+p+r+s+t+u+v)\\&&+J_t\left\lbrace i[l(m+w+o)+k(p+r+s)+j(t+u+v)]\right\rbrace ,
\end{eqnarray*}
$i,j,k,l\in \{-1,+1\}$ and $n=1,2,\ldots $ (see figure
\ref{fig2}).

In physics, it is well known that a partition function describes
the statistical properties of a system in thermodynamic
equilibrium \cite{Preston,Kindermann}. Therefore, we should derive
the partition functions associated with the Hamiltonian
\eqref{Ham-Ist}. From the equation \eqref{partition1aa} by using
the iterative method, we can derive the partial partition
functions as follows:
$$
\left\{
\begin{array}{l}
 z_1^{(n+1)}=a^3(bc)^{-9}\left((bc)^6z_1^{(n)}+3(bc)^4z_2^{(n)}+3(bc)^2z_3^{(n)}+z_4^{(n)}\right)^3 \\
 z_2^{(n+1)}=a(bc)^{-9}\left((bc)^6z_1^{(n)}+3(bc)^4z_2^{(n)}+3(bc)^2z_3^{(n)}+z_4^{(n)}\right)^2 \\
 \ \ \ \ \ \ \  \ \ \text{}\times \left(b^6z_5^{(n)}+3c^2b^4z_6^{(n)}+3b^2c^4z_7^{(n)}+c^6z_8^{(n)}\right) \\
 z_3^{(n+1)}=a^{-1}(bc)^{-9}\left((bc)^6z_1^{(n)}+3(bc)^4z_2^{(n)}+3(bc)^2z_3^{(n)}+z_4^{(n)}\right) \\
 \ \ \ \ \ \ \  \ \ \text{}\times \left(b^6z_5^{(n)}+3c^2b^4z_6^{(n)}+3b^2c^4z_7^{(n)}+c^6z_8^{(n)}\right)^2 \\
 z_4^{(n+1)}=a^{-3}(bc)^{-9}\left(b^6z_5^{(n)}+3c^2b^4z_6^{(n)}+3b^2c^4z_7^{(n)}+c^6z_8^{(n)}\right)^3
\end{array}
\right.
$$
$$
\left\{
\begin{array}{l}
 z_5^{(n+1)}=a^{-3}(bc)^{-9}\left(z_1^{(n)}+3(bc)^2z_2^{(n)}+3(bc)^4z_3^{(n)}+(bc)^6z_4^{(n)}\right)^3 \\
 z_6^{(n+1)}=a^{-1}(bc)^{-9}\left(z_1^{(n)}+3(bc)^2z_2^{(n)}+3(bc)^4z_3^{(n)}+(bc)^6z_4^{(n)}\right)^2 \\
  \ \ \ \ \ \ \  \ \ \text{}\times \left(c^6z_5^{(n)}+3c^4b^2z_6^{(n)}+3c^2b^4z_7^{(n)}+b^6z_8^{(n)}\right) \\
 z_7^{(n+1)}=a(bc)^{-9}\left(z_1^{(n)}+3(bc)^2z_2^{(n)}+3(bc)^4z_3^{(n)}+(bc)^6z_4^{(n)}\right) \\
  \ \ \ \ \ \ \  \ \ \text{}\times \left(c^6z_5^{(n)}+3c^4b^2z_6^{(n)}+3c^2b^4z_7^{(n)}+b^6z_8^{(n)}\right)^2 \\
 z_8^{(n+1)}=a^3(bc)^{-9}\left(c^6z_5^{(n)}+3c^4b^2z_6^{(n)}+3c^2b^4z_7^{(n)}+b^6z_8^{(n)}\right)^3,
\end{array}
\right.
$$
where $a=\re^{\beta J}$, $b=\re^{\beta J_p}$, $c=\re^{\beta J_t}$.
Noting that
\begin{eqnarray*}
&&\left(z_2^{(n+1)}\right)^3=\left(z_1^{(n+1)}\right)^2z_4^{(n+1)},\
\left(z_3^{(n+1)}\right)^3=z_1^{(n+1)}\left(z_4^{(n+1)}\right)^2,\\
&&\left(z_6^{(n+1)}\right)^3=\left(z_5^{(n+1)}\right)^2z_8^{(n+1)},\
\left(z_7^{(n+1)}\right)^3=z_5^{(n+1)}\left(z_8^{(n+1)}\right)^2,
\end{eqnarray*}
we obtain independent variables, introducing the new variables
$u_i^{(n+1)}=(z_i^{(n+1)})^{\frac{1}{3}}$, and we get the recurrence
system of the following simpler form:
\begin{eqnarray}\label{rec-eq2}
\left\{
\begin{array}{l}
 u_1^{(n+1)}=\frac{a}{(bc)^3}\left((bc)^2u_1^{(n)}+u_4^{(n)}\right)^3 \\
 u_4^{(n+1)}=\frac{1}{a(bc)^3}\left(b^2u_5^{(n)}+c^2u_8^{(n)}\right)^3 \\
 u_5^{(n+1)}=\frac{1}{a(bc)^3}\left(u_1^{(n)}+(bc)^2u_4^{(n)}\right)^3 \\
 u_8^{(n+1)}=\frac{a}{(bc)^3}\left(c^2u_5^{(n)}+b^2u_8^{(n)}\right)^3.
\end{array}
\right.
\end{eqnarray}
Let us define the operator as
$$
F: u^{(n)}=(u_1^{(n)}, u_4^{(n)}, u_5^{(n)}, u_8^{(n)})\in
\mathbb{R}^{4}_+ \rightarrow F(u^{(n)}) = (u_1^{(n+1)},
u_4^{(n+1)}, u_5^{(n+1)}, u_8^{(n+1)})\in \mathbb{R}^{4}_+.
$$
Then, we can write the recurrence equations \eqref{rec-eq2} as
$u^{(n+1)} = F(u^{(n)}), n > 0$ which in the theory of dynamical
systems is called a trajectory of the initial point $u^{(0)}$
under the action of the operator $F$. In this way we can specify
the asymptotic behavior of the partition functions $Z_n$ for
$n\rightarrow \infty$ by means of the trajectory of $u^{(0)}$
under the action of the operator $F$. We examine the
dynamical system for a given initial point $u^{(0)}\in
\mathbb{R}^{4}_+$ in detail.
\section{Dynamics behavior of the operator $F$}\label{Dynamics of the
operator}

In this section we identify the fixed points of the operator
$F$ given in~\eqref{rec-eq2} by using the method given in
\cite{RAU} and \cite[chapter 3]{Rozikov}. Here, we assume that
$c=1$, or equivalently $J_t=0$.

Denote the set of the fixed points of the operator $F$ by
$$
\text{Fix}(F) = \{u : F(u) = u\}.
$$
From \eqref{rec-eq2}, by substituting the new variables $\alpha^3
=a$, $(v_i^{(n)})^3 = u_i^{(n)}$, for $i = 1, 4, 5, 8$. Then, one
gets a new operator as follows:
\begin{eqnarray}\label{rec-eq3a}
\left\{
\begin{array}{l}
 v_1^{(n+1)}=\frac{\alpha}{(bc)}\left((bc)^2(v_1^{(n)})^{3}+(v_4^{(n)})^{3}\right) \\
 v_4^{(n+1)}=\frac{1}{\alpha (bc)}\left(b^2(v_5^{(n)})^{3}+c^2(v_8^{(n)})^{3}\right) \\
 v_5^{(n+1)}=\frac{1}{\alpha(bc)}\left((v_1^{(n)})^{3}+(bc)^2(v_4^{(n)})^{3}\right)\\
 v_8^{(n+1)}=\frac{\alpha}{(bc)}\left(c^2(v_5^{(n)})^{3}+b^2(v_8^{(n)})^{3}\right).
\end{array}
\right.
\end{eqnarray}
We  deal with the fixed points of a system of recurrent
equations given in \eqref{rec-eq3a}. To this end, it suffices to
solve the system of equations

\begin{eqnarray}\label{rec-eq3}
\left\{
\begin{array}{l}
 v_1=\frac{\alpha}{(bc)}\left((bc)^2(v_1)^{3}+(v_4)^{3}\right) \\
 v_4=\frac{1}{\alpha (bc)}\left(b^2(v_5)^{3}+c^2(v_8)^{3}\right) \\
 v_5=\frac{1}{\alpha(bc)}\left((v_1)^{3}+(bc)^2(v_4)^{3}\right)\\
 v_8=\frac{\alpha}{(bc)}\left(c^2(v_5)^{3}+b^2(v_8)^{3}\right).
\end{array}
\right.
\end{eqnarray}
Let us consider the following set:
\begin{eqnarray}\label{set-invariant}
A=\{(v_1, v_4, v_5, v_8)\in
\mathbb{R}^{4}_+:v_1=v_8,v_4=v_5,c=1\}.
\end{eqnarray}
Note that the set $A$ is invariant with respect to $F$ i.e.,
$F(A)\subset A.$
\begin{lem}\label{Lemma-fixed-set}
If a vector $\textbf{u}$ is a fixed point of the operator $F$,
then $\textbf{u}\in M_1:=\{\textbf{u}=(u_1,u_4,u_5,u_8)\in
\mathbf{R}^{4}_{+}: u_1=u_8,u_4=u_5\}$ or $\textbf{u}\in
M_2:=\{\textbf{u}=(u_1,u_4,u_5,u_8)\in \mathbf{R}^{4}_{+}:
\left(\sqrt[3]{u_4}+\sqrt[3]{u_5}\right)^2-\sqrt[3]{u_4u_5}=\psi
\left(\left(\sqrt[3]{u_1}+\sqrt[3]{u_8}\right){}^2-\sqrt[3]{u_1u_8}\right)\}$,
where $\psi(y)
=\frac{b^3+\alpha  y (1-b^4)}{b^2 \alpha \left(b \alpha
y-1\right)}$.
\end{lem}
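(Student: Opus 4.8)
The plan is to reduce the fixed point equation $F(\mathbf u)=\mathbf u$ to the polynomial system \eqref{rec-eq3} and then isolate the ``antisymmetric part'' of a fixed point by taking pairwise differences of the four scalar equations. Since $t\mapsto\sqrt[3]{t}$ is a bijection of $\mathbb R_+$, a vector $\mathbf u=(u_1,u_4,u_5,u_8)\in\mathbb R^4_+$ is fixed by $F$ if and only if $v_i:=\sqrt[3]{u_i}$ solve \eqref{rec-eq3} with $\alpha=\sqrt[3]{a}$; recalling that $c=1$ in this section and clearing the prefactors, these become $\frac{b}{\alpha}v_1=b^2v_1^3+v_4^3$, $\alpha b\,v_4=b^2v_5^3+v_8^3$, $\alpha b\,v_5=v_1^3+b^2v_4^3$, $\frac{b}{\alpha}v_8=v_5^3+b^2v_8^3$. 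In particular $\mathbf u\in M_1$ is the same as $v_1=v_8$ and $v_4=v_5$.

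Next I would set $X=v_1^3-v_8^3$, $Y=v_4^3-v_5^3$, $P=v_1^2+v_1v_8+v_8^2$, $Q=v_4^2+v_4v_5+v_5^2$, and record the two identities $P=(\sqrt[3]{u_1}+\sqrt[3]{u_8})^2-\sqrt[3]{u_1u_8}$ and $Q=(\sqrt[3]{u_4}+\sqrt[3]{u_5})^2-\sqrt[3]{u_4u_5}$. Subtracting the $v_8$-equation from the $v_1$-equation, and the $v_4$-equation from the $v_5$-equation, and using $v_1^3-v_8^3=(v_1-v_8)P$ and $v_4^3-v_5^3=(v_4-v_5)Q$, produces the homogeneous linear system
$$\left(b^2-\frac{b}{\alpha P}\right)X+Y=0,\qquad X+\left(b^2+\frac{\alpha b}{Q}\right)Y=0.$$

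Now I would split into two cases. If $X=0$ or $Y=0$, these two relations immediately force $X=Y=0$, i.e.\ $v_1=v_8$ and $v_4=v_5$, so $\mathbf u\in M_1$. Otherwise $(X,Y)\neq(0,0)$ is a solution of the system, hence its determinant vanishes:
$$\left(b^2-\frac{b}{\alpha P}\right)\left(b^2+\frac{\alpha b}{Q}\right)=1.$$
In particular neither factor vanishes, so $b\alpha P\neq1$ and the denominator of $\psi$ is nonzero; clearing denominators in this identity (multiply through by $PQ$ and by $\alpha$) and solving for $Q$ expresses it as the prescribed rational function of $P$, namely $Q=\psi(P)$ with $\psi(y)=\frac{b^3+\alpha y(1-b^4)}{b^2\alpha(b\alpha y-1)}$. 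Translating $P$ and $Q$ back through the two identities above, this is exactly the relation defining $M_2$, so $\mathbf u\in M_2$.

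The step that needs care is this last algebraic simplification: one must expand the product, collect everything over the common denominator $PQ$, isolate $Q$, and recognise the outcome as precisely $\psi(P)$, keeping track of signs and checking that no division by zero occurs (which, as noted, is guaranteed by the determinant identity itself). The remaining ingredients are routine: bijectivity of the cube root in the reduction step, the factorisation $t^3-s^3=(t-s)(t^2+ts+s^2)$, and the observation that if one of $X,Y$ vanishes the system forces the other to vanish as well.
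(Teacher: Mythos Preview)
Your approach is essentially the same as the paper's. The paper also subtracts the $v_1$- and $v_8$-equations, and the $v_4$- and $v_5$-equations, obtaining two linear relations in $(v_1-v_8)$ and $(v_4-v_5)$ with coefficients depending on $P=v_1^2+v_1v_8+v_8^2$ and $Q=v_4^2+v_4v_5+v_5^2$; it then observes that either both differences vanish (giving $M_1$) or one may eliminate the ratio $(v_1-v_8)/(v_4-v_5)$ between the two relations to obtain a single equation in $P$ and $Q$ (giving $M_2$). Your reformulation via $X=v_1^3-v_8^3$, $Y=v_4^3-v_5^3$ and the vanishing of the $2\times2$ determinant is just a cosmetic repackaging of that elimination, and your observation that $b^2+\alpha b/Q>0$ forces $X=0\Leftrightarrow Y=0$ matches the paper's ``$v_1=v_8\Leftrightarrow v_4=v_5$'' step.

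One caution on the final algebra you flag: when you expand $(b^2-\tfrac{b}{\alpha P})(b^2+\tfrac{\alpha b}{Q})=1$ and solve for $Q$, the result is
\[
Q=\frac{\alpha b^{2}(\alpha b P-1)}{\,b^{3}+\alpha P(1-b^{4})\,},
\]
which is the \emph{reciprocal} of the stated $\psi(P)$. The paper's own intermediate equation leads to the same expression, so the discrepancy lies in the displayed formula for $\psi$ rather than in your argument; you were right to single this step out as the one that needs care.
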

\begin{proof} From the system \eqref{rec-eq3}, we have
\begin{eqnarray}\label{fixed-1a}
&& \left(v_1-v_8\right)\left(b\alpha
\left(v_1^2+v_1v_8+v_8^2\right)-1\right)+\alpha
b^{-1}\left(v_4-v_5\right)\left(v_4^2+v_4v_5+v_5^2\right)=0,
\\\label{fixed-1b}
&& (\alpha
b)^{-1}(\left(v_1-v_8\right)(v_1^2+v_1v_8+v_8^2)+(v_4-v_5)(b^2v_4^2+b^2v_4v_5+b^2v_5^2+\alpha
b))=0.
\end{eqnarray}
From \eqref{fixed-1a} and \eqref{fixed-1b} one can conclude that
if $v_1=v_8$ (respectively $v_4=v_5$), then, $v_4=v_5$
(respectively $v_1=v_8$). Therefore, $v_4=v_5$ if and only if
$v_1=v_8$.

Now, let us assume that $v_1\neq v_8$ and $v_4\neq v_5$, then we
can reduce the equations \eqref{fixed-1a} and \eqref{fixed-1b} to
the following equation:
$$
\frac{\alpha  b^{-1}\left(v_4^2+v_4v_5+v_5^2\right)}{\left(b
\alpha
\left(v_1^2+v_1v_8+v_8^2\right)-1\right)}=\frac{\left(b^2v_4^2+b^2v_4v_5+b^2v_5^2+\alpha
b\right)}{\left(v_1^2+v_1v_8+v_8^2\right)}.
$$
Therefore, from the last equation we have
\begin{equation}\label{fixed-inv2}
v_4^2+v_4v_5+v_5^2=\frac{b^3+\alpha  \left(v_1^2+v_1
v_8+v_8^2\right)(1-b^4)}{b^2 \alpha  \left(b \alpha
\left(v_1^2+v_1 v_8+v_8^2\right)-1\right)}.
\end{equation}
The equation \eqref{fixed-inv2} gives
$\left(\sqrt[3]{u_4}+\sqrt[3]{u_5}\right)^2-\sqrt{3}{u_4u_5}=\psi
\left[\left(\sqrt[3]{u_1}+\sqrt[3]{u_8}\right)^2-\sqrt[3]{u_1u_8}\right]$
(see \cite{GU2011a}).
\end{proof}
\begin{rem}
The fixed points of the operator $F$ belonging to the set $M_2$
give the ferromagnetic phases corresponding to the Ising model
\eqref{Ham-Ist}. It is analytically very difficult to examine the fixed points of the operator $F$
belonging to the set $M_2$. The
ferromagnetic phase regions corresponding to the Ising model
\eqref{Ham-Ist} can be numerically  determined.
\end{rem}
\subsection{The existence of paramagnetic and ferromagnetic phases}\label{The paramagnetic phases}
In this subsection, we  analytically prove the existence of
paramagnetic and ferromagnetic phases for the Ising model
\eqref{Ham-Ist}. Therefore, we deal with the fixed points of the
operator $F$ belonging to the set $A$ given in the equation
\eqref{set-invariant}. 

Assume that $u_1^{(n)}=u_8^{(n)}, u_4^{(n)}= u_5^{(n)}$. After
replacing $x_n=\frac{v_1^{(n)}}{v_4^{(n)}}$, we get the following
system of recurrent equation
\begin{eqnarray}\label{c=1-fixed points-a}
x_{n+1}=\alpha^2\left(\frac{1+b^2(x_{n})^3}{b^2+(x_{n})^3}\right),
\end{eqnarray}
where $b>0, x_n>0$ and $\alpha>0$.

We will obtain the fixed points of the recurrent equation
\eqref{c=1-fixed points-a}. To this end, it suffices to solve the
following equation
\begin{eqnarray}\label{c=1-fixed points}
\alpha^{-2}x=f(x):=\left(\frac{1+b^2x^3}{b^2+x^3}\right),
\end{eqnarray}
Note that  in order to describe the phases (or the limiting
Gibbs measures) corresponding to the model, we take into account
the approach based on recurrent equations for partition functions
(see \cite{RAU,BleherG,AT1,GTA-CUBO2005,GR-2019} for details). We
denote the set of the Gibbs measures corresponding to the
Hamiltonian \eqref{Ham-Ist} by $\mathcal{G}_H$.

One can show that $f$ is bounded and thus the curve
$y=\left(\frac{1+b^2x^3}{b^2+x^3}\right)$ must intersect the line
$y = \alpha^{-2}x$. Therefore, our construction gives an element of
$\mathcal{G}_H$. Note that if the equation \eqref{c=1-fixed
points} has more than one solution, then our construction gives
more than one element of $\mathcal{G}_H$ (see \cite[proposition
10.7]{Preston} for details).

\begin{prop}\label{proposition1}
The equation \eqref{c=1-fixed points} 
has a unique solution if $0<b<1$.  Assume that $b>\sqrt{2}$, then
the equation \eqref{c=1-fixed points} has 2 solutions if either
$\eta_1(b)=\alpha^{-2}$ or $\eta_2(b)=\alpha^{-2}$. If
$\eta_1(b)<\alpha^{-2}<\eta_2(b)$, then 
the equation \eqref{c=1-fixed points} has 3 solutions. In fact, we
have
\begin{eqnarray*}
\eta_i
(b)&=&\frac{1}{x_i}\left(\frac{1+b^2x_i^3}{b^2+x_i^3}\right),
\end{eqnarray*}
where $x_i$ are the solutions of the equation
$b^2x^6-2\left(b^4-2\right)x^3+b^2=0$.
\end{prop}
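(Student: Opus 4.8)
The plan is to reduce the statement to a monotonicity analysis of a single fixed function. Since $x>0$, equation \eqref{c=1-fixed points} is equivalent to $g(x)=\alpha^{-2}$ with $g(x):=\dfrac{f(x)}{x}=\dfrac{1+b^2x^3}{x\left(b^2+x^3\right)}$, so counting its solutions amounts to counting the points where the graph of $g$ over $(0,\infty)$ meets the horizontal level $\alpha^{-2}>0$. First I would record the boundary behaviour $\lim_{x\to0^+}g(x)=+\infty$ and $\lim_{x\to+\infty}g(x)=0^+$ (which, with continuity, re-proves that there is always at least one solution), and then differentiate. A direct computation gives
$$
g'(x)=-\,\frac{b^2x^6-2\left(b^4-2\right)x^3+b^2}{\left(b^2x+x^4\right)^{2}},
$$
so the sign of $g'(x)$ is opposite to that of $P(x):=b^2x^6-2\left(b^4-2\right)x^3+b^2$. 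The substitution $t=x^3>0$ turns $P$ into the quadratic $Q(t)=b^2t^2-2\left(b^4-2\right)t+b^2$, whose discriminant equals $4\left(b^4-1\right)\left(b^4-4\right)$ and whose roots have product $1$ and sum $2\left(b^4-2\right)/b^2$.

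If $0<b<1$, then $b^4<1$, so $2(b^4-2)/b^2<0$ while the root-product $1$ is positive; since the discriminant $4(b^4-1)(b^4-4)$ is positive, $Q$ has two real negative roots, hence $Q(t)>0$ for all $t>0$. Thus $P(x)>0$ and $g'(x)<0$ on $(0,\infty)$, so $g$ is a strictly decreasing bijection of $(0,\infty)$ onto $(0,\infty)$ and \eqref{c=1-fixed points} has exactly one solution. (The same argument in fact covers all $0<b\le\sqrt2$, since for such $b$ the discriminant is $\le0$ and $Q\ge0$.)

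If $b>\sqrt2$, then $b^4>4$, so the discriminant is positive, the root-sum positive, and the root-product $1$; hence $Q$ has two distinct positive roots $t_1<1<t_2$ with $t_1t_2=1$. Writing $x_i=\sqrt[3]{t_i}$, these are exactly the positive roots of $b^2x^6-2(b^4-2)x^3+b^2=0$, and $P$ is positive on $(0,x_1)$, negative on $(x_1,x_2)$, and positive on $(x_2,\infty)$. Consequently $g$ strictly decreases on $(0,x_1)$ from $+\infty$ to $\eta_1(b):=g(x_1)$, strictly increases on $(x_1,x_2)$ to $\eta_2(b):=g(x_2)$, and strictly decreases on $(x_2,\infty)$ to $0$; in particular $0<\eta_1(b)<\eta_2(b)$, and by construction $\eta_i(b)=\dfrac1{x_i}\left(\dfrac{1+b^2x_i^3}{b^2+x_i^3}\right)$, as stated.

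To finish, I would partition $(0,\infty)$ as $(0,x_1)\cup[x_1,x_2]\cup(x_2,\infty)$ and apply the intermediate value theorem on each monotone piece: $g$ attains each value of $\left(\eta_1(b),\infty\right)$ exactly once on $(0,x_1)$, each value of $\left[\eta_1(b),\eta_2(b)\right]$ exactly once on $[x_1,x_2]$, and each value of $\left(0,\eta_2(b)\right)$ exactly once on $(x_2,\infty)$. Adding the contributions of the level $\alpha^{-2}$ to these three pieces gives: exactly $3$ solutions when $\eta_1(b)<\alpha^{-2}<\eta_2(b)$; exactly $2$ when $\alpha^{-2}=\eta_1(b)$ or $\alpha^{-2}=\eta_2(b)$; and exactly $1$ otherwise. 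This is the assertion. I do not expect a genuinely hard step; the only care needed is the derivative computation, the sign discussion of $Q$ in terms of $b$, and — to avoid miscounting at the critical levels — the clean partition of $(0,\infty)$ above together with the inequality $\eta_1(b)<\eta_2(b)$, which is immediate from the strict monotonicity of $g$ between its two critical points.
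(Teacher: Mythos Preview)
Your argument is correct and lands on the same key polynomial $b^2x^6-2(b^4-2)x^3+b^2=0$ and the same discriminant analysis as the paper. The packaging differs slightly: the paper works with $f$ itself, uses its concavity (via $f''$) to bound the number of intersections with the line $y=\alpha^{-2}x$ by three, and then invokes Preston's criterion that multiple fixed points occur exactly when $xf'(x)=f(x)$ has more than one solution; you instead study $g(x)=f(x)/x$ and its critical points directly. These are equivalent, since $g'(x)=\bigl(xf'(x)-f(x)\bigr)/x^2$, so your condition $g'(x)=0$ is precisely the paper's $xf'(x)=f(x)$. What your route buys is self-containment: you avoid the external reference to Preston, and your monotone partition of $(0,\infty)$ into three pieces makes the exact counts (one, two, or three solutions) explicit, whereas the paper's proof establishes the critical equation and the reality condition on $b$ but leaves the final counting to the reader. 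You also correctly note that the case $1\le b\le\sqrt{2}$ still gives a unique solution, a point the paper's statement and proof leave implicit.
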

\begin{proof}
Let us take the first and the second derivatives of the function
$f$, and we have
\begin{eqnarray}\label{first-derivative}
f'(x)=\frac{3\left(b^4-1\right) x^2}{\left(b^2+x^3\right)^2},
\end{eqnarray}
$$
f''(x)=\frac{6 \left(b^4-1\right) x \left(b^2-2 x^3\right)
}{\left(b^2+x^3\right)^3}.
$$
From \eqref{first-derivative}, assume that $b<1$ (with $x \geq 0$),
then $f$  decreases and there exists a single solution of the
equation $f(x)=\alpha^{-2}x.$ Thus, in order to examine the phase
transition of the model, we should solve the equation
$f(x)=\alpha^{-2}x$ for $b>1.$ One can show that the graph of
$y=f(x)$ over interval $(0,\sqrt[3]{\frac{b^2}{2}})$ is concave up,
and the graph of $y=f(x)$ over the interval
$(\sqrt[3]{\frac{b^2}{2}},\infty)$ is concave down. As a result,
there are at most 3 positive solutions for $f(x)=\alpha^{-2}x.$

According to Preston \cite[proposition 10.7]{Preston},
the rational function $f$ has more than one fixed point if and
only if equation $xf'(x) = f(x)$ has more than one solution, which
is the same as
$$
b^2 x^6-2\left(b^4-2\right)x^3+b^2=0.
$$
These roots are
$$
x_1=\sqrt[3]{\frac{-2+b^4-\sqrt{4-5
b^4+b^8}}{b^2}},x_2=\sqrt[3]{\frac{-2+b^4+\sqrt{4-5
b^4+b^8}}{b^2}}.
$$
Note that if $4-5 b^4+b^8\geqslant 0$, then the roots $x_1$ and $x_2$
are real numbers. In this case, $4-5 b^4+b^8\geqslant 0$ if and only if
$b\in [0,1]\cap[\sqrt{2},\infty )$.
\end{proof}

\subsubsection{An illustrative example}

Let us consider the following equation:
\begin{eqnarray}\label{c=1-fixed points12}
x=g(x):=\alpha^{2}\left(\frac{1+b^2x^3}{b^2+x^3}\right).
\end{eqnarray}

From \eqref{c=1-fixed points12}, one gets
\begin{equation}\label{polinomial4d}
P_4(x)=x^4-\alpha^2b^2x^3+b^2x-\alpha^2=0.
\end{equation}
To solve the equation \eqref{polinomial4d} analytically is a
rigorous and complicated problem. Therefore, we have manipulated
the polynomial equation via Mathematica \cite{Wolfram}. Here, we
 only deal with positive fixed points due to the
positivity of exponential functions.
\begin{figure}[!t]
\begin{minipage}[h]{0.49\linewidth}
\center{\includegraphics[width=1\linewidth]{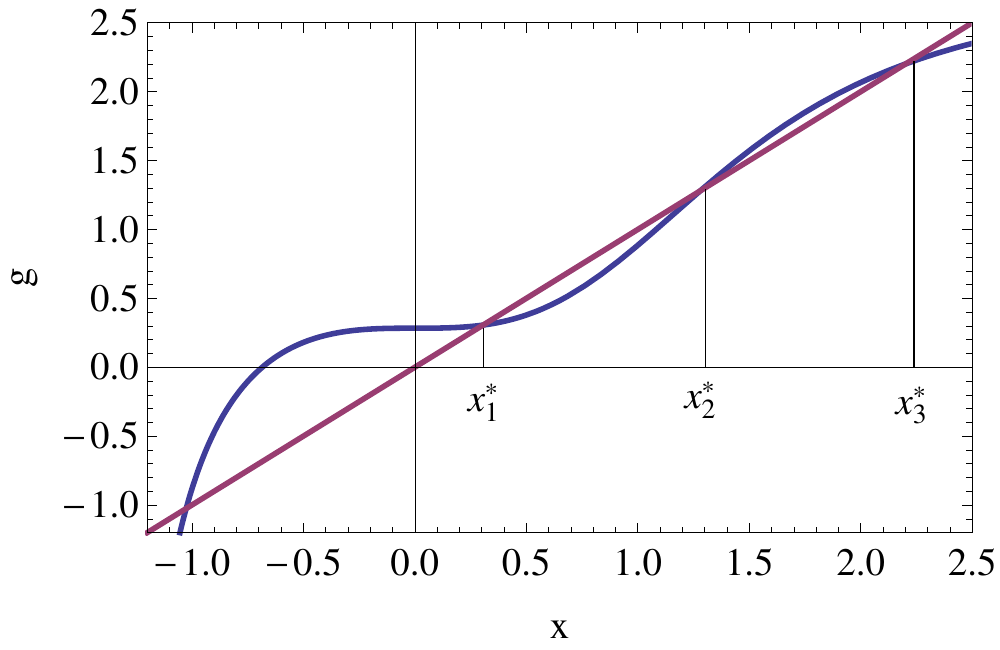}} (a) \\
\end{minipage}
\hfill
\begin{minipage}[h]{0.47\linewidth}
\center{\includegraphics[width=1\linewidth]{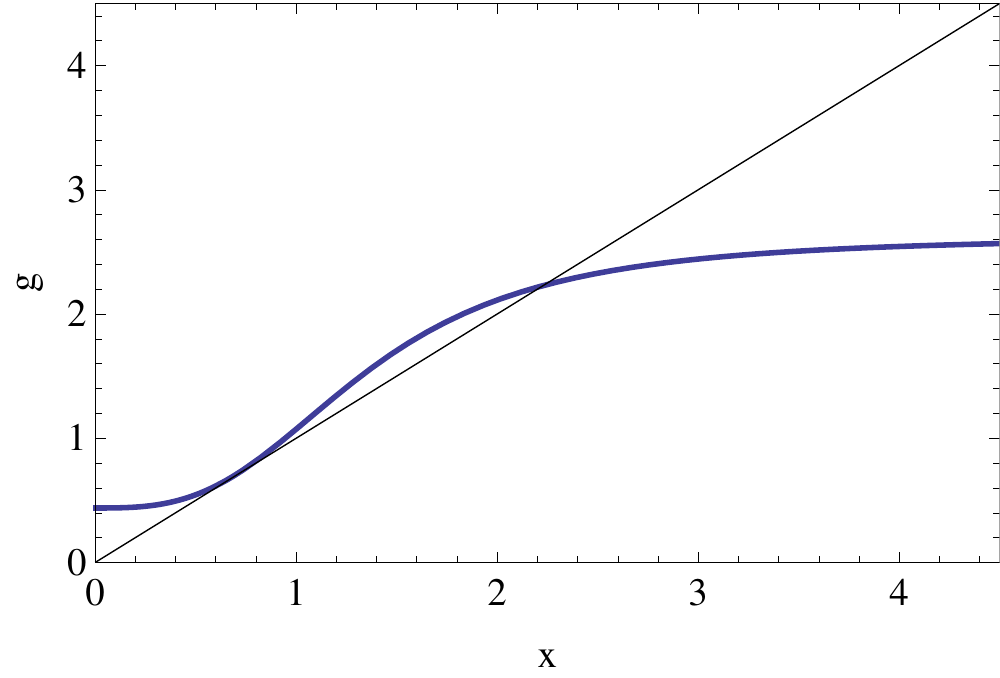}} \\(b)
\end{minipage}
\centering
\begin{minipage}[h]{0.47\linewidth}
\center{\includegraphics[width=1\linewidth]{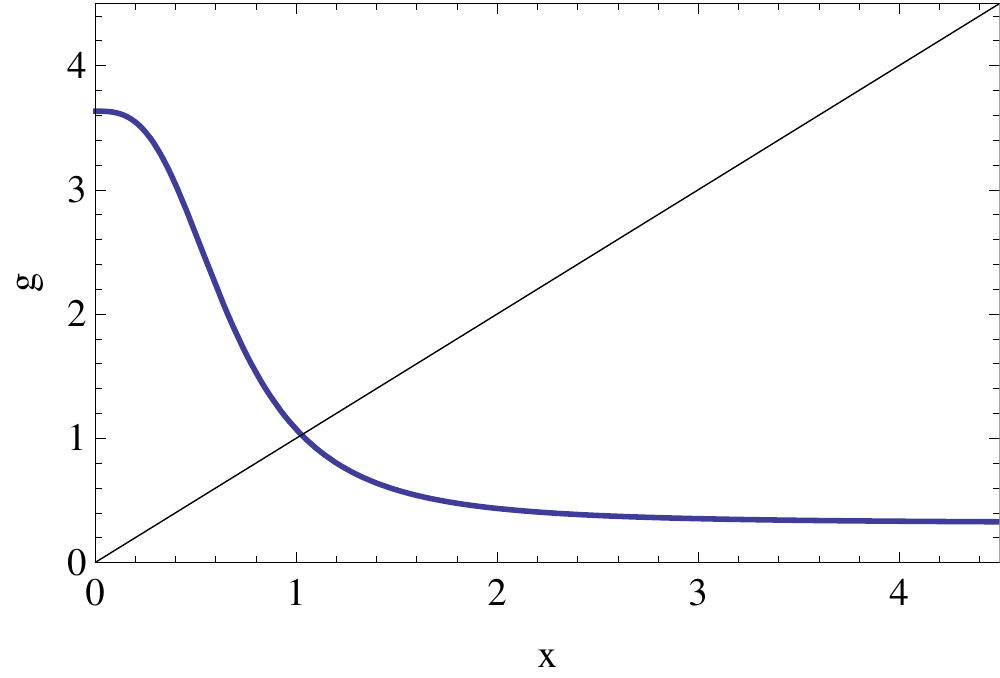}} (c) \\
\end{minipage}
\caption{(Colour online) (a) There exist three positive roots of
the equation \eqref{c=1-fixed points12} for $J = -4.5, J_p = 14, T
= 24.6$. (b) There exist two positive roots of the equation
\eqref{c=1-fixed points12} for $J = 2.6, T = 24.6, J_p= 11$. (c)
There exists only one positive root of the equation
\eqref{c=1-fixed points12} for $J = 2.6, T = 24.6, J_p=
-15$.}\label{3parametric-phase-c=1}
\end{figure}

We have obtained at most 3 positive real roots for some parameters
$J$, $J_p$ and $J_{t}=0$ (coupling constants) and temperature $T$.
As seen in figure~\ref{3parametric-phase-c=1} (a), the equation
\eqref{c=1-fixed points} has 3 positive fixed roots for $J = -4.5,
J_p = 14, J_{t}=0, T = 24.6$
($b=\re^{\frac{14}{24.6}}=1.77>\sqrt{2})$. These all fixed points
are $x_0^*= -1.02554,x_1^*=0.306205, x_2^*=1.28008,
x_3^*=2.20209$, respectively.

One can examine that $g'(0.306205)=0.219<1$ and
$g'(2.202)=0.59115<1$. Therefore, the fixed points $x_1^*=0.306205,
x_3^*=2.20209$ are stable (attracting) fixed points. Moreover, due to
$g'(1.28008)=1.3967>1$, the point $x_2^*=1.28008$ is an unstable
(repelling) fixed point.
The point $x_{cr}=1.15992$ is a breaking point of  the function
$g$. Thus, there are two extreme paramagnetic phases associated
with the positive fixed points.

There exist two positive roots of the equation \eqref{c=1-fixed
points12} for $J = 2.6, T = 24.6, J_p= 11$ [figure
\ref{3parametric-phase-c=1} (b)]. There exists only one positive
root for the equation \eqref{c=1-fixed points12} for $J = 2.6, T =
24.6, J_p= -15$ ($b=\re^{\frac{-15}{24.6}}=0.54<1$) [figure
\ref{3parametric-phase-c=1} (c)].

\subsection{The periodic points of the operator $F$}
One of the most interesting problems in non-linear dynamic systems
is to investigate the existence of periodic points. \cite{GR}.
In statistical physics, these periodic points reveal the phase
types corresponding to the given model
\cite{RAU,AT1,GTA-CUBO2005}.

We recall some definitions and results first.
\begin{defin}
A point $\textbf{u}=(u_1,u_4,u_5,u_8)$ in $\mathbf{R}^{4}_+$ is
called a periodic point of $F$ if there exists $p$ so that
$F^{p}(\textbf{u})=\textbf{u}$ where $F^{p}$ is the $p$th iterate
of $F$. The smallest positive integer $p$ satisfying the above is
called the prime period or the least period of the point $\textbf{u}$.
Denote by Per$_p(F)$ the set of periodic points with prime period
$p$.
\end{defin}
In order to find the periodic points of the operator $F$ with
$p=2$ on $M_1$, we  solve the equation $F(F(\textbf{u})) =
\textbf{u}$. Therefore, we deal with the solutions of equation
\begin{equation}\label{2-period1}
g(g(x))=x.
\end{equation}
Now, we consider the equation
$$
\frac{g(g(x))-x}{g(x)-x}=0.
$$
After simple calculations, we get
\begin{eqnarray}\label{polinomial6d}
&&p_6(x):=4b^2(1+b^4 \alpha ^6)x^6+ \alpha ^2(b^4-1)x^5+b^2 \alpha
^4(b^4-1)x^4 \\\nonumber &&+ 2 b^4(1+ \alpha^6)x^3+ b^2 \alpha
^2(b^4-1)x^2+\alpha ^4(b^4-1)x+b^2(b^4+\alpha ^6)=0.
\end{eqnarray}
In order to describe the periodic points with $p=2$ on $M_1$ of
the operator $F$, we should find the solutions to
\eqref{polinomial6d} which are different from the solutions of the
equation \eqref{polinomial4d}. In other words, we obtain the
set
\begin{equation}\label{2-period-set}
M_3:=\{(u_1,u_4,u_5,u_8)\in \mathbf{R}^{4}_+:g (g(x)) = x
\}.
\end{equation}
Therefore, we should examine the roots of the polynomial $p_6(x)$
of degree 6. As mentioned above, the roots of such polynomials can
be described using the known formulae. 
In order to illustrate the problem, we have manipulated the
equation \eqref{2-period1} via Mathematica \cite{Wolfram} [see
figure \ref{periodic2-phase} (red color)]. The black graph in
figure \ref{periodic2-phase} represents the roots of the nonlinear
function $y=g(x).$
\begin{figure} [!t]
\centering
\includegraphics[width=70mm]{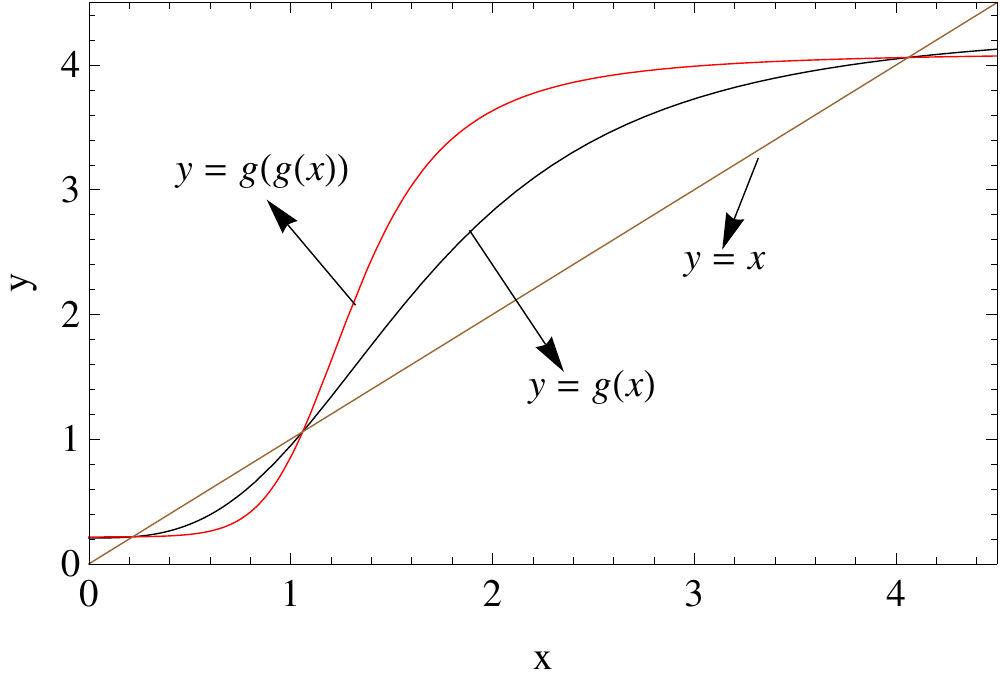}
\caption{(Colour online) There exist three positive roots of the
equation \eqref{c=1-fixed points} (black color) for $J = -6$,  $J_p =
56.9$,  $T = 75$. Moreover, the equation \eqref{2-period1} has three
positive roots (red color) for $J = -6$,  $J_p = 56.9$,  $T =
75$.}\label{periodic2-phase}
\end{figure}

We can obtain an initial point of the sequence
$(u_1^{(n)},u_4^{(n)},u_5^{(n)},u_8^{(n)})$ under positive
boundary condition as follows:
$$
\textbf{u}_0=(u_1^{(0)},u_4^{(0)},u_5^{(0)},u_8^{(0)})=(ab^3,\frac{b^3}{a},\frac{1}{ab^3},\frac{a}{b^3}).
$$
In order to study some useful features of the function $g$, let us
give the following lemma.
\begin{lem}\label{repelling-point}
1) Let $b> 1$, then the sequence $x_n =g(x_{n-1})$ converges to
$x_0=\sqrt[3]{\frac{u_1^{(0)}}{u_4^{(0)}}}=\alpha ^{2}>0$  under a
positive boundary condition, where $g$ is defined in
\eqref{c=1-fixed points12}.\\
2) Let $b<1$. then the sequence $y_n =h(y_{n-1})$ converges to
the initial point $y_0=\frac{\alpha ^2 \left(1+b^2 \alpha
^{6}\right)}{b^2+\alpha ^{6}}>0$ under a positive boundary
condition, where $h(x) = g(g(x)).$
\end{lem}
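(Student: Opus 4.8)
The plan is to recognize $x_0 = \alpha^2$ as a fixed point of $g$ and then to control convergence of the orbit by a monotonicity/contraction argument on the explicit derivative $g'$. First I would verify directly from \eqref{c=1-fixed points12} that $g(\alpha^2) = \alpha^2\bigl(\tfrac{1+b^2\alpha^6}{b^2+\alpha^6}\bigr) = \alpha^2$, which is immediate since numerator and denominator coincide up to the factor $\alpha^2$; one checks this matches $x_0 = \sqrt[3]{u_1^{(0)}/u_4^{(0)}} = \sqrt[3]{(ab^3)/(b^3/a)} = \sqrt[3]{a^2} = \alpha^2$ using $\alpha^3 = a$. So the claimed limit is indeed a fixed point, and the content is that the orbit started from the positive boundary condition $\mathbf{u}_0$ actually reaches it.

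Next I would compute, from \eqref{first-derivative} (rescaled to $g$), that $g'(x) = \dfrac{3\alpha^2(b^4-1)x^2}{(b^2+x^3)^2}$. For $b>1$ this is nonnegative, so $g$ is increasing on $(0,\infty)$, and I would evaluate $g'$ at the fixed point $x_0 = \alpha^2$ and show $0 \le g'(\alpha^2) < 1$. The inequality $g'(\alpha^2)<1$ should reduce, after clearing denominators, to an elementary polynomial inequality in $b$ and $\alpha$; alternatively one can argue via the AM–GM bound $(b^2+x^3)^2 \ge 4b^2 x^3$, giving $g'(x) \le \tfrac{3\alpha^2(b^4-1)x^2}{4b^2x^3} = \tfrac{3\alpha^2(b^4-1)}{4b^2 x}$, which at $x=\alpha^2$ is $\tfrac{3(b^4-1)}{4b^2}$ — and for that to be $<1$ one needs $3b^4 - 4b^2 - 3 < 0$, which fails for large $b$, so the crude bound is not enough and one must use the exact value of $g'(\alpha^2)$. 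That exact computation is the first genuine obstacle. Once $|g'(x_0)|<1$ is established, local attractivity is standard; to upgrade to convergence from the specific initial point I would use that $g$ is monotone increasing with a single positive fixed point in the relevant range (this is exactly Proposition~\ref{proposition1} with $0<b<1$ replaced here — I would instead invoke that for $b>1$ on the sub-case where \eqref{c=1-fixed points} has a unique positive root, the monotone orbit $x_n = g(x_{n-1})$ is squeezed toward $x_0$), so $x_n \to x_0$.

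For part 2), with $b<1$ the map $g$ is decreasing, so $h = g\circ g$ is increasing, and I would first identify $y_0 = \dfrac{\alpha^2(1+b^2\alpha^6)}{b^2+\alpha^6}$ as $g(\alpha^2)$, hence as the image of $x_0$ under one step; then $y_0$ is a fixed point of $h$ because $g(g(y_0)) = g(g(g(\alpha^2))) = g(\alpha^2) = y_0$ once we know $\alpha^2$ is still a fixed point of $g$ (the identity $g(\alpha^2)=\alpha^2$ does not use the sign of $b-1$). Wait — that would force $y_0 = \alpha^2$, so more carefully: $y_0$ is precisely $x_1$, the first iterate, and since $b<1$ makes $x_0=\alpha^2$ a repelling fixed point of $g$ (here $|g'(\alpha^2)|$ should be shown $>1$, again from the exact derivative — the second obstacle), the even subsequence $x_{2n}$ and odd subsequence $x_{2n+1}$ converge to the two branches of the attracting $2$-cycle, and one of these limits is $y_0$. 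I would make this precise by applying the same contraction estimate to $h'$ at $y_0$: by the chain rule $h'(y_0) = g'(g(\alpha^2))\,g'(\alpha^2)$, and I would show this product has absolute value $<1$, giving local stability of $y_0$ under $h$, and then monotonicity of $h$ plus the location of the relevant fixed point of $h$ (a root of $p_6$ from \eqref{polinomial6d}, distinct from the roots of $P_4$) forces $y_n = h(y_{n-1}) \to y_0$ from the positive boundary start. The main difficulty throughout is the exact evaluation and sign analysis of $g'$ (and $h'$) at these explicit points — the rest is the standard monotone-dynamics packaging — so I would organize the write-up around two clean sub-lemmas: (i) $g(\alpha^2)=\alpha^2$ with $|g'(\alpha^2)|<1$ iff $b>1$, and (ii) monotone orbits of a unimodal-type map with one interior fixed point converge to it.
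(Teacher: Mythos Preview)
Your central claim that $g(\alpha^2)=\alpha^2$ is false. From \eqref{c=1-fixed points12},
\[
g(\alpha^2)=\alpha^2\,\frac{1+b^2\alpha^6}{b^2+\alpha^6},
\]
and this equals $\alpha^2$ precisely when $(b^2-1)(\alpha^6-1)=0$, i.e.\ only for $b=1$ or $\alpha=1$. So $x_0=\alpha^2$ is \emph{not} a fixed point of $g$ in general, and the whole contraction argument built around $|g'(\alpha^2)|<1$ in part~1 collapses. You actually noticed the inconsistency yourself in part~2 (``Wait --- that would force $y_0=\alpha^2$''), but then continued to invoke the same false identity (``the identity $g(\alpha^2)=\alpha^2$ does not use the sign of $b-1$''). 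The genuine fixed points of $g$ are the positive roots of $P_4$ in \eqref{polinomial4d}, which are not $\alpha^2$ except by accident.

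The lemma as printed is awkwardly worded; the intended reading is that the orbit \emph{converges} (to some fixed point of $g$ when $b>1$, to some fixed point of $h$ when $b<1$), with $x_0=\alpha^2$ and $y_0=g(\alpha^2)$ merely the \emph{initial} data determined by the positive boundary condition --- not the limits. The paper itself gives no proof beyond ``similar to \cite{RAU}'', so there is nothing substantive to compare your attempt against. The argument in the spirit of \cite{RAU} is the one you gesture at only at the very end: for $b>1$ the map $g$ is increasing and bounded on $(0,\infty)$ (with range between $\alpha^2/b^2$ and $\alpha^2 b^2$), so $x_n=g(x_{n-1})$ is monotone and bounded, hence convergent to a root of $P_4$; for $b<1$ the map $g$ is decreasing, so $h=g\circ g$ is increasing and bounded, and the same monotone--bounded argument applied to $y_n=h(y_{n-1})$ gives convergence to a fixed point of $h$. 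No derivative evaluation at $\alpha^2$ is needed, and none would be correct.
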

The proof of lemma \ref{repelling-point} is similar to \cite{RAU}.

\section{The phase diagrams of the model} 

In this section, we
 plot the phase diagrams associated with our model in in
three-parameter spaces. From the system \eqref{rec-eq2}, the total
partition function is given in terms of $({{u}_{i}})$ by
\[
Z^{(n)}=(u_{1}+u_{4})^{3}+(u_{5}+u_{8})^{3}.
\]
To discuss the phase diagram, we are interested in the
following reduced variables:
\begin{eqnarray}\label{eq8}
x^{(n)}=\frac{u_{4}^{(n)}+u_{5}^{(n)}}{u_1^{(n)}+u_{8}^{(n)}}, \
y^{(n)}=\frac{u_1^{(n)}-u_{8}^{(n)}}{u_1^{(n)}+u_{8}^{(n)}}, \
z^{(n)}=\frac{u_{4}^{(n)}-u_{5}^{(n)}}{u_1^{(n)}+u_{8}^{(n)}}.
\end{eqnarray}
The variable $x^{(n)}$ is just a measure of the frustration of the
nearest-neighbor bonds, and it is not an order parameter like
$y^{(n)},z^{(n)}$.

By using the equations \eqref{eq8}, we obtain the following
equalities for our computation purpose in the next equations as
$$
\frac{1+y^{(n)}}{2}=\frac{u_1^{(n)}}{u_1^{(n)}+u_8^{(n)}},\frac{1-y^{(n)}}{2}=\frac{u_8^{(n)}}{u_1^{(n)}+u_8^{(n)}},
$$
$$
 \frac{x^{(n)}+z^{(n)}}{2}=\frac{u_4^{(n)}}{u_1^{(n)}+u_8^{(n)}},\frac{x^{(n)}-z^{(n)}}{2}=\frac{u_5^{(n)}}{u_1^{(n)}+u_8^{(n)}}.
$$

From the equations \eqref{eq8}  and from the last equations, after some
lengthy and difficult calculations, we get the following
recurrence dynamical system:
\begin{eqnarray}\label{dynamical system2}
\left\{
\begin{array}{l}
 x^{(n+1)}=\frac{\left(c^2\left(1-y^{(n)}\right)+b^2\left(x^{(n)}-z^{(n)}\right)\right)^3+\left(1+y^{(n)}+b^2c^2\left(x^{(n)}+z^{(n)}\right)\right)^3}
{a^2\left(\left(x^{(n)}+b^2c^2\left(1+y^{(n)}\right)+z^{(n)}\right)^3+\left(b^2\left(1-y^{(n)}\right)+c^2\left(x^{(n)}-z^{(n)}\right)\right)^3\right)}\\ \\
 y^{(n+1)}=\frac{\left(x^{(n)}+b^2c^2\left(1+y^{(n)}\right)+z^{(n)}\right)^3-\left(b^2\left(1-y^{(n)}\right)+c^2\left(x^{(n)}-z^{(n)}\right)\right)^3}
{\left(x^{(n)}+b^2c^2\left(1+y^{(n)}\right)+z^{(n)}\right)^3+\left(b^2\left(1-y^{(n)}\right)+c^2\left(x^{(n)}-z^{(n)}\right)\right)^3}\\\\
z^{(n+1)}=\frac{\left(c^2\left(1-y^{(n)}\right)+b^2\left(x^{(n)}-z^{(n)}\right)\right)^3-\left(1+y^{(n)}+b^2c^2\left(x^{(n)}+z^{(n)}\right)\right)^3}
{a^2\left(\left(x^{(n)}+b^2c^2\left(1+y^{(n)}\right)+z^{(n)}\right)^3+\left(b^2\left(1-y^{(n)}\right)+c^2\left(x^{(n)}-z^{(n)}\right)\right)^3\right)}\,,
\end{array}
\right.
\end{eqnarray}
we assume $a=\exp(\alpha^{-1}), b=\exp(-\alpha^{-1}\beta)$ and
$c=\exp(-\alpha^{-1}\gamma)$, where $T/J=\alpha$, $-J_p/J=\beta$,
$-J_{t}/J=\gamma$, respectively.

The system obtained in \eqref{dynamical system2} is more
complicated than one might have anticipated. Therefore, it remains
difficult to be tackled analytically apart from simple limits and
numerical methods that are necessary to study the behavior of the
system in detail (see \cite{Vannimenus}).

Taking into account the positive boundary condition
$\bar{\sigma}^{(n)}(V\setminus V_n)\equiv 1$, we obtain initial
conditions
 \begin{eqnarray}\label{initial-con11}
\left\{
\begin{array}{l}
 x^{(1)}=\frac{1}{a^2c^6}, \\
 y^{(1)}=\frac{b^6-1}{b^6+1}, \\
 z^{(1)}=\frac{b^6-1}{a^2c^6(b^6+1)}.
\end{array}
\right.
\end{eqnarray}
For the fixed points, we consider the corresponding magnetization
$m$ given by
\begin{equation}\label{mag1}
m^{(n)}=\frac{\left(1+x^{(n)}+y^{(n)}+z^{(n)}\right)^3-\left(1+x^{(n)}-y^{(n)}-z^{(n)}\right)^3}
{\left(1+x^{(n)}+y^{(n)}+z^{(n)}\right)^3+\left(1+x^{(n)}-y^{(n)}-z^{(n)}\right)^3}.
\end{equation}
The initial point of the magnetization $m$ can be obtained as;
\begin{equation}\label{initial-mag1}
m^{(1)}=\frac{(b^2-1)\left((1+b^2)^2-b^2\right)\left((1+b^6)^2-b^6\right)}{\left(1+b^2\right)\left(1-b^2+b^4\right)\left(1-b^6+b^{12}\right)}.
\end{equation}
In order to plot the phase diagrams, we iterate the recurrence
equations \eqref{dynamical system2} and \eqref{mag1} and observe the
behavior of the phase diagrams after a large number of iterations
($n=10 000$).

A measure of the frustration of the nearest-neighbor bonds is
determined by the variable $x$ \cite{Vannimenus}. Furthermore, if $u_1 =
u_8$ and $u_4= u_5$, that is $y^{(n)}= z^{(n)}\rightarrow 0$, then
paramagnetic phase is obtained. 
If  $y^{(n)}\nrightarrow 0$ or $z^{(n)}\nrightarrow 0$, then one
obtains the ferromagnetic phase for the model (see
\cite{Vannimenus,MTA1985a} for details).

In the simplest situation, a fixed point $\textbf{u}^{*} =
(u^{*}_1, u^{*}_4, u^{*}_5, u^{*}_8)\in \mathbf{R}^{4}_{+}$ is
reached. Possible initial conditions with respect to  different
boundary conditions can be obtained in \cite{Vannimenus,MTA1985a}.
Here, we consider initial conditions \eqref{initial-con11} and
\eqref{initial-mag1}. Depending on $u^{*}_1, u^{*}_4, u^{*}_5,
u^{*}_8$, in the simplest situation a fixed point $(x^*,y^*,z^*)$
is reached. It corresponds to a {\it paramagnetic } phase (briefly
{\bf P}) if $y^*=0,z^*=0 $ or to a {\it ferromagnetic} phase
(briefly {\bf  F}) if $y^*,z^* \neq 0.$ From formula of average
magnetization \eqref{mag1} it follows that a situation where $y^*,z^*
\neq 0$ but $m=0$, cannot occur. Otherwise, the system has periodic
phases with the period $p$, i.e., the periodic phase is a configuration
with some period.

Let us consider the limit
$$
\lim_{n\rightarrow \infty }
(x^{(n)},y^{(n)},z^{(n)})=\lim_{n\rightarrow \infty }
(x^{(n+p)},y^{(n+p)},z^{(n+p)})=(x^{(*)},y^{(*)},z^{(*)}).
$$
The case $p=2$ corresponds to {\it antiferromagnetic} phase
(briefly {\bf P2}) and the case $p=4$ corresponds to the so-called
{\it antiphase} (briefly {\bf  P4}), denoted by \ $<2>$ for
compactness in \cite{Vannimenus,MTA1985a}. Finally, the system has
aperiodic phases, that is $p>11$, the system has modulated phases
(see \cite{UGAT2012IJMPC,MTA1985a,Inawashiro} for details). We
just consider periodic phases with period $p$ where $p\leqslant 12$
(briefly {\bf P2-P12}).
From the Lemma \ref{Lemma-fixed-set}, we can obtain the following
results:
\begin{itemize}
    \item a paramagnetic phase: if $\textbf{u}^{*}\in M_1$, in  figure \ref{fazk=3} (b), the white regions represent the paramagnetic
    phase. This represents  the set $M_1$ given in Lemma \ref{Lemma-fixed-set};
    \item a ferromagnetic phase: if $\textbf{u}^{*}\in M_2$, in  figure \ref{fazk=3} (b), the red regions represent the ferromagnetic
    phase. This represents  the set $M_2$ given in Lemma \ref{Lemma-fixed-set};
    \item in  figure \ref{fazk=3} (b), the yellow regions represent the
    \textbf{P2} phase. This represents  the set $M_3$ given in
    \eqref{2-period-set}.
\end{itemize}

\begin{figure}[!t]\centering
\includegraphics[width=55mm]{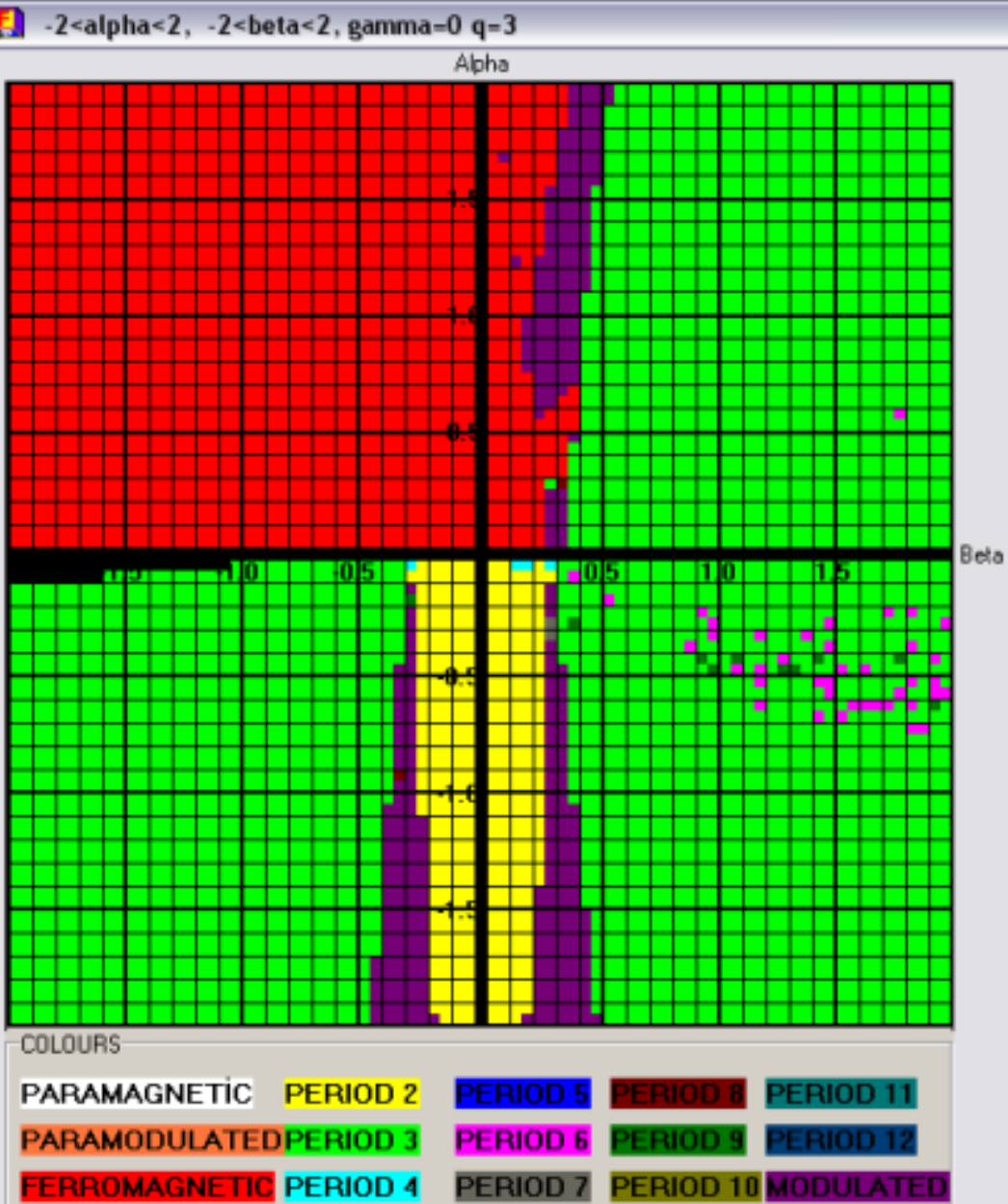}\ \ \ \ \ \ \ \ \ \ \ \
\includegraphics[width=55mm]{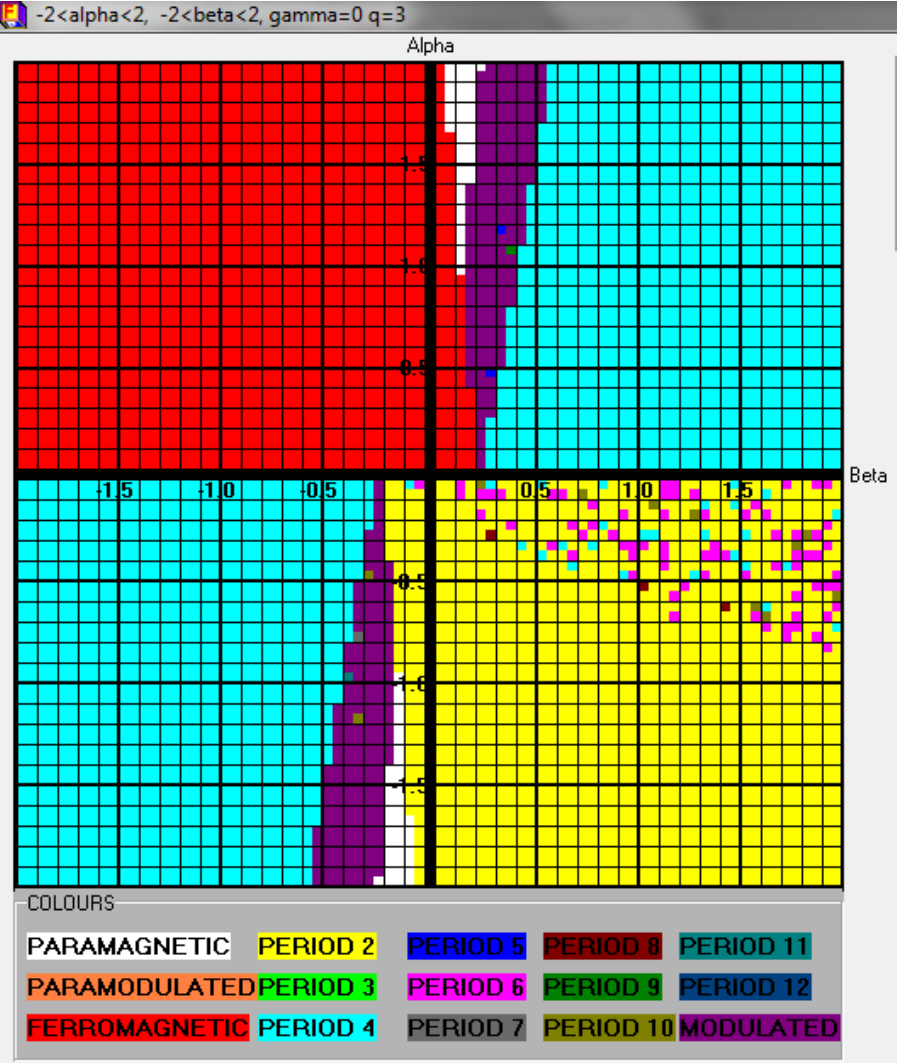}
\caption{(Colour online) (a) Phase diagram of the model for $J_p=0$
(the left-hand figure); (b) Phase diagram of the model for $J_t=0$ (the
right-hand figure).}\label{fazk=3}
\end{figure}

In figures \ref{fazk=3} (a) and (b), we  plotted the phase
diagrams of the model on the third-order Cayley tree \textbf{on the
rectangular region $[-2,2]\times [-2,2]$}.

In figure \ref{fazk=3} (a), we observe that the phase diagram
contains ferromagnetic (F), period-2, period-3 and modulated
phases, though the paramagnetic phase (white region) completely
disappears.

In the resultant phase diagram plotted by Vannimenus
\cite{Vannimenus}, the multicritical Libschit point is located in
$(-J_p/J,T/J)=(1/3,0)$, that is, the phases defined as the
ferromagnetic, paramagnetic, modulated and $<2>$ meet at the point
$(-J_p/J,T/J)=(1/3,0)$. In other words, all four phases meet at
the multicritical point $(p= 1/3,T=0)$. Contrary to Vannimenus's
work \cite{Vannimenus}, here the multicritical Libschit points
appear in non-zero points. In the resultant phase diagram in
figure \ref{fazk=3} (b), the ferromagnetic, paramagnetic,
modulated and $<2>$ phases intersect at the point
$(-J_p/J,T/J)=(0.15,0.39)$. Moreover,  the phases of ferromagnetic,
paramagnetic, modulated and \textbf{P}2 intersect at the
multicritical point $(J_p/J,-T/J)=(-0.15,-0.39)$.

In figure \ref{fazk=3} (b), the phase diagram consists of
ferromagnetic (F), paramagnetic (P) (fixed point), chaotic (C) (or
modulated), and antiferromagnetic + + - - (four cycle
antiferromagnetic phase) phases.

In the regions we have called the modulated (\textbf{M}) phase,
narrow regions with periodic orbit representing commensurate
phases have been identified. It is really difficult to detect
those with a periodicity higher than 12 in the modulated phase
regions. In order to determine the types of these phases, Lyapunov
exponent and the attractors are analyzed for some critical
parameters (see
\cite{UGAT2012IJMPC,MTA1985a,Inawashiro,Inawashiro-T1983}). The
modulated phases may generally contain two phases that correspond
to the devil's staircase \cite{MTA1985a}. These are called
commensurate (periodic) and incommensurate (aperiodic) regions,
respectively.

In Proposition \ref{proposition1}, we  analytically showed that
there exits a paramagnetic phase if and only if $b\in
[0,1]\cap[\sqrt{2},\infty )$. Therefore, we can present the following
theorem.

\begin{thm}\label{theorem-p} The model \eqref{Ham-Ist} (with $x \geqslant 0, \alpha > 0, b > 0$) has
a unique paramagnetic phase (Gibbs measure) if $b<1$. Assume that
if $b > \sqrt{2}$, then  the model \eqref{Ham-Ist} has exactly two
paramagnetic phases if either $\eta_1(b)=\alpha^{-2}$ or
$\eta_2(b)=\alpha^{-2}$. If $\eta_1(b)<\alpha^{-2}<\eta_2(b)$, 
then the model \eqref{Ham-Ist} has exactly three paramagnetic
phases.
\end{thm}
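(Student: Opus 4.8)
The plan is to reduce Theorem~\ref{theorem-p} entirely to Proposition~\ref{proposition1}, which already does all the analytic work. The key conceptual point established earlier is that paramagnetic phases correspond precisely to fixed points of $F$ lying in the invariant set $A$ (equivalently, in $M_1$), and that on $M_1$, after the substitution $x_n = v_1^{(n)}/v_4^{(n)}$, the dynamics collapses to the one-dimensional recurrence \eqref{c=1-fixed points-a}. Hence fixed points of $F$ in $M_1$ are in bijection with positive solutions of \eqref{c=1-fixed points}, i.e. with positive roots of $\alpha^{-2}x = f(x)$ where $f(x) = (1+b^2x^3)/(b^2+x^3)$. So the number of paramagnetic phases equals the number of positive solutions of \eqref{c=1-fixed points}, and I would state this correspondence explicitly as the first step, citing the construction around \eqref{c=1-fixed points} and \cite[proposition~10.7]{Preston} for the fact that distinct fixed points of the recurrence give distinct elements of $\mathcal{G}_H$.

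Second, I would invoke Proposition~\ref{proposition1} verbatim for the three regimes: for $0<b<1$, $f$ is decreasing (since $f'(x) = 3(b^4-1)x^2/(b^2+x^3)^2 < 0$), so $f(x) - \alpha^{-2}x$ is strictly decreasing from $f(0)=1/b^2>0$ to $-\infty$, giving exactly one positive root, hence a unique paramagnetic phase. For $b>\sqrt{2}$, the second-derivative analysis in the proof of Proposition~\ref{proposition1} shows $y=f(x)$ is convex on $(0,\sqrt[3]{b^2/2})$ and concave on $(\sqrt[3]{b^2/2},\infty)$, so the line $y=\alpha^{-2}x$ meets the curve in at most three points; the tangency/transversality condition $xf'(x)=f(x)$ reduces to $b^2x^6 - 2(b^4-2)x^3 + b^2 = 0$ with the two real roots $x_1,x_2$, and $\eta_i(b) = f(x_i)/x_i$ are exactly the slopes at which a tangency occurs. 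Thus two solutions occur when $\alpha^{-2}$ equals one of the boundary slopes $\eta_1(b)$ or $\eta_2(b)$, and three solutions when $\eta_1(b) < \alpha^{-2} < \eta_2(b)$. Translating "number of positive roots of \eqref{c=1-fixed points}" into "number of paramagnetic phases" via the correspondence from step one finishes the proof.

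I do not expect a serious obstacle, since the heavy lifting is in Proposition~\ref{proposition1}; the theorem is essentially a repackaging. The one place that needs a little care is the bijection in step one: one must check that two distinct positive roots $x^* \neq \tilde{x}^*$ of \eqref{c=1-fixed points} really do yield two \emph{distinct} Gibbs measures and not the same measure counted twice. This follows because the reduced variable $x = v_1/v_4 = \sqrt[3]{u_1/u_4}$ is a genuine function of the fixed point $\mathbf{u}^*$, so different $x^*$ force different $\mathbf{u}^*$, and by the standard partition-function construction (see \cite{RAU,Preston,BleherG,AT1}) different fixed points of $F$ give different limiting Gibbs measures; Lemma~\ref{repelling-point} confirms that these fixed points are actually attained as limits under admissible boundary conditions. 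I would also note in passing that the edge case $b=1$ is degenerate ($f\equiv 1/b^2$ constant, one solution) and that $b=\sqrt{2}$ is the threshold at which the discriminant $4 - 5b^4 + b^8$ vanishes and $x_1 = x_2$, so the two tangency slopes coincide and the multiple-phase regime just begins — consistent with the statement's hypothesis $b > \sqrt{2}$.
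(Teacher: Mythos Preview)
Your proposal is correct and follows essentially the same approach as the paper: the paper's proof consists of a single sentence stating that the theorem is obtained from Proposition~\ref{proposition1}, together with the observation (via \eqref{eq8}) that $u_1=u_8$, $u_4=u_5$ forces $y^{(n)},z^{(n)}\to 0$, so the resulting phases are paramagnetic. Your write-up is considerably more detailed --- in particular the explicit check that distinct roots $x^*$ yield distinct Gibbs measures and the treatment of the boundary cases $b=1$, $b=\sqrt{2}$ --- but the underlying strategy is identical.
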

The proof of theorem \ref{theorem-p} is obtained from the
proposition \eqref{proposition1}. Note that from the equation
\eqref{eq8}, the corresponding phase is paramagnetic, because we take
into account $u_1^{(n)}=u_8^{(n)}$ and $u_4^{(n)}=u_5^{(n)}$, that
is, $x^{(n)}\rightarrow x^{*}$, $y^{(n)}\rightarrow 0$ and
$z^{(n)}\rightarrow 0$ (see \cite{Vannimenus} for details).

\begin{rem} We conclude that the phase boundaries between the periodic phases
presented in the diagrams are only approximate, because they are based
on the iteration of recursion relations in equation \eqref{dynamical
system2}, rather than on the minimization of the free energy.
Therefore, it is possible to calculate the free energy via the
scheme proposed by Gujrati \cite{Gujrati1995}. Therefore, one can
consider employing that scheme in order to improve the precision
of the phase diagrams. To calculate the free energy will not be
considered here.
\end{rem}

\begin{rem} In order to distinguish these phases from each other, one needs to
analyse the modulated phase regions via Lyapunov exponent and the
attractors in detail (see
\cite{UGAT2012IJMPC,MTA1985a,Inawashiro,Inawashiro-T1983}). Here,
we do not give these details.
\end{rem}
\subsection{The fixed points of the operator $F$ for $b=1$}
In the equation \eqref{rec-eq3}, if we assume as $b=1$, then we
have
\begin{eqnarray}\label{rec-eqb=1}
\left\{
\begin{array}{l}
 v_1=\frac{\alpha}{c}\left(c^2v_1^{3}+v_4^{3}\right) \\
 v_4=\frac{1}{\alpha c}\left(v_5^{3}+c^2v_8^{3}\right) \\
 v_5=\frac{1}{\alpha c}\left(v_1^{3}+c^2v_4^{3}\right)\\
 v_8=\frac{\alpha}{c}\left(c^2v_5^{3}+v_8^{3}\right).
\end{array}
\right.
\end{eqnarray}
Now, we describe the positive fixed points of system
\eqref{rec-eqb=1}.
Let us consider the following set
\begin{eqnarray}\label{non-para}
B:=\{(v_1,v_4,v_5,v_8)\in \mathbf{R}^4_+:v_1=v_8,v_4=v_5\}.
\end{eqnarray}
\begin{rem}
From the system \eqref{rec-eqb=1}, it is clear that the equations
$v_1=v_8$ and $v_4=v_5$ do not satisfy, i.e., the set $B$ given in
\eqref{non-para} is empty. Therefore, the paramagnetic phase
regions in the phase diagrams associated with the model disappear
in $[-2,2]\times [-2,2]\subset \mathbf{R}^{2}$.
\end{rem}
We can conclude that there exists at least one fixed point of the
corresponding $g$ function if $u_1=u_8, u_4 = u_5$ and $c = 1$,
that is, the system has a parametric phase [see figure \ref{fazk=3}
(b)]. Conversely, if $b = 1$ then $u_1=u_8, u_4 = u_5$ are not
satisfied, so the corresponding operator has no fixed points. In
this case, the system has no paramagnetic phase. However, the same system
has a ferromagnetic phase [see figure~\ref{fazk=3}~(a)].

\section{Conclusions}\label{Conclusions}
Written for both mathematics and physics audience, this paper has
a fourfold purpose:
\begin{enumerate}
    \item[1)] to analytically study  the recurrence
equations associated with the model \eqref{Ham-Ist};
    \item[2)] to numerically obtain
 the paramagnetic, the ferromagnetic and period 2
regions corresponding to the sets $M_1,M_2, B$, respectively;
    \item[3)]
to illustrate the fixed points of the corresponding operator;
\item[4)] to compare the numerical results with the exact solutions of
the model.
\end{enumerate}
We state some unsolved problems that turned out to be rather
complicated and require a further consideration:
\begin{enumerate}
    \item Do any other invariant sets of the operator $F$ exist?
    \item Do positive fixed points of the operator $F$ exist outside the invariant sets?
    \item Does there exist a periodic point ($p>2$) of rather cumbersome high-order equations that can be solved using
      analytic methods?
\end{enumerate}
In the first case, we have already obtained the fixed points of the
operator $F$ so that $u_1=u_8$ and $u_4=u_5$. For the periodic
case, however, it is not possible to obtain all solutions
satisfying all requirements of equations \eqref{rec-eq2}, so that
$\{\textbf{u}=(u_1,u_4,u_5,u_8)\in
\mathbf{R}^4_+:F^p(\textbf{u})=\textbf{u},p>1\}$ is invariant. The
proof of this statement  involves  a number mathematical
complexities. Moreover, in the second case ($b=1$), it is much more difficult to analytically find 
the fixed points of the operator  $F$.

Vannimenus \cite{Vannimenus} showed that at $T=0$, only two
different ground-states are encountered: the ferromagnetic state,
for $p(=-J_p/J)$ smaller than 1/3, and a state of period 4 in
which the magnetization of the successive generations alternates
with an antiferromagnetic structure $(+ + - -)$, for $p > 1/3$. By
using the standard approach, we have analytically proved that
there exits a paramagnetic phase when $J_p> 0$ and for the phase with
period 2 when $J_p<0$. Our results do not contrast with those of
Vannimenus~\cite{Vannimenus}. In this present paper, we show that
a paramagnetic phase completely disappears at $T=0$.
We plan to experiment in our future work in order to analyse the same
problem analytically for the high-order Cayley tree. Recently, the
author \cite{AkinCMP2019} has studied the existence of the Gibbs
measures of an Ising model with competing interactions on the
triangular chandelier-lattice. The phase diagrams corresponding to
the Gibbs states on the Cayley-like lattices have not been examined
yet. In the future papers, we are planning to investigate the same
problems (see \cite{AkinRG2020}).
\section*{Acknowledgements}
The author thanks  the referee for the careful reading of the
manuscript and for the valuable comments and suggestions.

\ukrainianpart

\title{Визначення парамагнітної та феромагнітної фаз моделі Ізінга на дереві Кейлі третього порядку}
\author {Г. Акін}
\address{вул. Джейхун Атуф Кансу 1164, Сокак, 9/4, TR06105, Чанкая, Анкара, Туреччина}

\makeukrtitle 

\begin{abstract}
У статті отримано рекурентні рівняння  моделі Ізінга
з трьома константами зв'язку на дереві Кейлі третього порядку. Представлено характеристики парамагнітна та феромагнітної фази, пов'язані з
 моделлю Ізінга. Строго вивчаються типи фаз та статистична сума, що відповідають моделі. Точні
розв'язки згаданої моделі порівнюються з числовими
результатами, наведеними в  Ganikhodjaev {et al.} [ J. Concrete and
Applicable Mathematics, 2011, \textbf{9}, No.~1, 26--34].

\keywords  дерево Кейлі, модель Ізінга, парамагнітна фаза,
феромагнітна фаза 

\end{abstract}

\lastpage

\end{document}